\documentclass[onecolumn]{IEEEtran}

\usepackage[T1]{fontenc}
\usepackage{amsmath,amssymb}
\usepackage{newtxtext,newtxmath}
\usepackage{cite}
\usepackage{microtype}
\usepackage[hidelinks]{hyperref}
\hypersetup{
  pdftitle={Exact Minimum Distance of the Ding--Li--Xia Cyclic Codes},
  pdfauthor={Yutong Zhang}
}

\allowdisplaybreaks[2]

\newtheorem{lemma}{Lemma}
\newtheorem{theorem}{Theorem}
\newtheorem{definition}{Definition}

\newcommand{\F}{\mathbb{F}}
\newcommand{\wtq}{\operatorname{wt}_{q}}
\newcommand{\PG}{\operatorname{PG}}

\title{Exact Minimum Distance of the Ding--Li--Xia Cyclic Codes}

\author{Yutong~Zhang,~\IEEEmembership{Member,~IEEE,}
        and~Yaoran~Yang,~\IEEEmembership{Member,~IEEE}%
\thanks{Manuscript received Month Day, 2026; revised Month Day, 2026.}
\thanks{Yutong Zhang and Yaoran Yang are with the School of Mathematics,
Sichuan University, Chengdu 610065, China
(e-mail: yutongzhang@stu.scu.edu.cn; yangyaoran@stu.scu.edu.cn).}
\thanks{Corresponding author: Yutong Zhang.}}

\begin{document}

\maketitle

\begin{abstract}
The cyclic codes $\mho(q,m,h)$ introduced by Ding, Li, and Xia form a nonbinary generalization of punctured binary Reed--Muller codes. Ding, Li, and Xia established the bounds $(q^{h+1}-1)/(q-1)\leq d(\mho(q,m,h))\leq 2q^h-1$ and asked whether the BCH lower bound is always exact. This paper proves that, for every prime power $q$, every $m\geq 2$, and every $1\leq h\leq m-1$, the minimum distance is $d(\mho(q,m,h))=(q^{h+1}-1)/(q-1)$. The upper bound is obtained by an explicit projective-subspace construction. For any $(h+1)$-dimensional $\F_q$-subspace $V$ of $\F_{q^m}$, the set $V^{[q-1]}=\{x^{q-1}:x\in V\setminus\{0\}\}$ supports a codeword of weight $(q^{h+1}-1)/(q-1)$. Its membership in $\mho(q,m,h)$ follows from a vanishing lemma for subspace power sums and the digit-sum estimate $s_q((q-1)a)\leq(q-1)\wtq(a)$. The constructed codeword meets the BCH lower bound and therefore determines the exact minimum distance.
\end{abstract}

\begin{IEEEkeywords}
BCH bound, cyclic codes, finite fields, generalized Reed--Muller codes, minimum distance, power sums, projective subspaces.
\end{IEEEkeywords}

\section{Introduction}

Reed--Muller codes are among the foundational algebraic families in coding theory. They were introduced independently by Muller and Reed in 1954 through Boolean-function and decoding viewpoints, respectively \cite{Muller1954,Reed1954}. Their polynomial structure, recursive constructions, and large automorphism groups have made them central in error correction and in subsequent developments involving finite geometry, Boolean functions, and local testing. Although the original binary Reed--Muller codes are not cyclic, puncturing the coordinate indexed by the zero vector produces cyclic codes of length $2^m-1$.

The nonbinary theory developed along several related lines. Kasami, Lin, and Peterson introduced primitive generalized Reed--Muller codes and gave a cyclic description in terms of the base-$q$ digit sum of the defining exponents \cite{KasamiLinPeterson1968}. Delsarte, Goethals, and MacWilliams subsequently placed generalized Reed--Muller codes and their relatives in a unified framework and determined their fundamental parameters and minimum-weight structure \cite{DelsarteGoethalsMacWilliams1970}. A complementary projective branch was developed by Lachaud and S\o rensen, who studied evaluation codes on projective space and determined the parameters of projective Reed--Muller codes \cite{Lachaud1990,Sorensen1991}. These affine, punctured, and projective versions share the principle that restrictions on finite-field polynomial degrees or on $q$-adic digits produce highly structured code families whose low-weight words often reflect affine or projective subspaces.

The family considered in this paper is a different nonbinary extension of the punctured binary Reed--Muller codes. Ding, Li, and Xia introduced the cyclic codes $\mho(q,m,h)$ and related extended and reversible codes in order to construct linear complementary-dual codes and combinatorial designs \cite{DingLiXia2018}. Their defining set is governed not by the sum of the base-$q$ digits, as in the classical punctured generalized Reed--Muller construction, but by the number of nonzero base-$q$ digits. These two statistics coincide when $q=2$, so the binary members recover punctured binary Reed--Muller codes; for $q>2$, they lead to genuinely different cyclic codes. Ding, Li, and Xia also proved that the corresponding extended codes are affine-invariant and that their codewords support $2$-designs, thereby connecting the family with both affine geometry and the theory of codes with complementary duals.

More precisely, Ding, Li, and Xia determined the dimension of $\mho(q,m,h)$ and established $(q^{h+1}-1)/(q-1)\leq d(\mho(q,m,h))\leq 2q^h-1$. The lower bound follows from a consecutive-zero argument and the BCH bound, whereas the upper bound follows from an inclusion involving a punctured generalized Reed--Muller code \cite{DingLiXia2018}. The two bounds coincide in the binary case, and exact values were also obtained in several special nonbinary cases, including $h=m-1$ and $(q,h)=(3,1)$. Ding, Li, and Xia therefore posed the natural question of whether the BCH lower bound is the exact minimum distance for all admissible parameters.

Hu and Feng subsequently studied the minimum distances of both $\mho(q,m,h)$ and its reversible companion \cite{HuFeng2020}. Their method constructs sparse codewords from suitable divisors of $q^m-1$. It yields improved upper bounds in a number of parameter ranges and proves, in particular, that the lower bound is attained when $q\geq 3$ and $h+1$ divides $m$. They also obtained more detailed results for $h=1$. These results supplied substantial evidence for the conjectured formula, but the divisor condition is arithmetic in nature and does not cover arbitrary pairs $(m,h)$.

The present paper settles the minimum-distance problem for the full family. The argument is geometric and does not require a divisibility relation between $m$ and $h+1$. Given any $(h+1)$-dimensional $\F_q$-subspace $V\leq\F_{q^m}$, consider the image $V^{[q-1]}=\{x^{q-1}:x\in V\setminus\{0\}\}$. The fibers of $x\mapsto x^{q-1}$ are precisely the nonzero vectors on the one-dimensional $\F_q$-subspaces of $V$. Thus $V^{[q-1]}$ is naturally indexed by the projective points of $V$ and has cardinality $(q^{h+1}-1)/(q-1)$. We prove that its incidence vector, indexed by $\F_{q^m}^{*}$, is a codeword of $\mho(q,m,h)$.

Two elementary ingredients make the construction work. First, if $V$ has dimension $r$ and a positive integer $E$ satisfies $s_q(E)<r(q-1)$, then the subspace moment $\sum_{x\in V}x^E$ vanishes. Second, for every defining exponent $a$, the inequality $s_q((q-1)a)\leq(q-1)\wtq(a)$ converts the support weight of the base-$q$ expansion into the degree threshold needed by the moment-vanishing lemma. With $r=h+1$ and $\wtq(a)\leq h$, these facts imply the vanishing of all defining power sums over $V^{[q-1]}$. The resulting codeword has exactly the BCH lower-bound weight, which proves the claimed equality. In addition to resolving the open problem of Ding, Li, and Xia, the construction exhibits a projective-subspace family of minimum-weight codewords.

We now fix the notation used in the proof. Let $q=p^s$, where $p$ is prime, $s\geq 1$, $m\geq 2$, and $n=q^m-1$. Let $\F_q$ and $\F_{q^m}$ denote the finite fields with $q$ and $q^m$ elements, respectively. Choose $\alpha$ to be a generator of $\F_{q^m}^{*}$; thus $\F_{q^m}^{*}=\langle\alpha\rangle$ and $\operatorname{ord}(\alpha)=q^m-1=n$.

For an integer $N\geq 0$, write its base-$q$ expansion as $N=\sum_{j\geq 0}N_jq^j$, where $0\leq N_j\leq q-1$, and define its base-$q$ digit sum by $s_q(N)=\sum_{j\geq 0}N_j$. For $0\leq a\leq q^m-1$, write $a=\sum_{j=0}^{m-1}a_jq^j$, where $0\leq a_j\leq q-1$, and define its base-$q$ Hamming weight by
\[
 \wtq(a)=\bigl|\{j:0\leq j\leq m-1,\ a_j\neq 0\}\bigr|.
\]
The defining set of $\mho(q,m,h)$ is
\[
 I(q,m,h)=\{a:1\leq a\leq q^m-2,\ 1\leq\wtq(a)\leq h\},
\]
where $1\leq h\leq m-1$.

For completeness, this set is a union of $q$-cyclotomic cosets modulo $n$. Indeed, if $1\leq a\leq q^m-2$ and $a=\sum_{j=0}^{m-1}a_jq^j$, then the least nonnegative residue of $qa$ modulo $n=q^m-1$ is
\[
 a_{m-1}+\sum_{j=0}^{m-2}a_jq^{j+1},
\]
which is a cyclic shift of the base-$q$ digits. Hence $\wtq(qa\bmod n)=\wtq(a)$.

Equivalently, if $c(X)=\sum_{i=0}^{n-1}c_iX^i\in\F_q[X]/(X^n-1)$, then
\[
 c\in\mho(q,m,h)
 \quad\Longleftrightarrow\quad
 c(\alpha^a)=0\quad\text{for every }a\in I(q,m,h).
\]
Identifying coordinate $i$ with $\alpha^i\in\F_{q^m}^{*}$ and writing $c_\beta=c_i$ when $\beta=\alpha^i$, the same condition becomes
\[
 c\in\mho(q,m,h)
 \quad\Longleftrightarrow\quad
 \sum_{\beta\in\F_{q^m}^{*}}c_\beta\beta^a=0
 \quad\text{for every }a\in I(q,m,h).
\]
All power sums below are computed in $\F_{q^m}$. Section~II first recalls the BCH lower bound, then proves the two auxiliary lemmas, constructs the projective-subspace codeword, and concludes with the exact minimum-distance theorem.

\section{Main Theorem and Its Proof}

\subsection{BCH Lower Bound}

\begin{lemma}
For $m\geq 2$ and $1\leq h\leq m-1$, one has
\[
 d(\mho(q,m,h))\geq \delta_{q,h}:=\frac{q^{h+1}-1}{q-1}.
\]
\end{lemma}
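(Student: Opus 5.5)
We apply the BCH bound, which needs a run of $\delta_{q,h}-1$ consecutive integers inside the defining set $I(q,m,h)$. The natural candidate is the run $1,2,\dots,\delta_{q,h}-1$. Since $\mho(q,m,h)$ is cyclic of length $n=q^m-1$, the BCH bound (with $\alpha$ primitive, so $\gcd(1,n)=1$) gives $d\geq \delta_{q,h}$. It therefore suffices to show that every integer $a$ with $1\leq a\leq \delta_{q,h}-1$ satisfies $1\leq a\leq q^m-2$ and $1\leq\wtq(a)\leq h$.

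The key observation concerns the threshold $\delta_{q,h}=1+q+\cdots+q^h$, whose base-$q$ expansion is the all-ones string of length $h+1$. Take $a$ with $1\leq a<\delta_{q,h}$.

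We first treat the case $a<q^h$. Then $a$ has at most $h$ base-$q$ digits, so $\wtq(a)\leq h$.

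Otherwise $q^h\leq a<1+q+\cdots+q^h$. The top digit $a_h$ satisfies $1\leq a_h$, and $a_h\leq 1$ because $a<2q^h$. Hence $a_h=1$ and the remainder $a'=a-q^h$ satisfies $a'<1+q+\cdots+q^{h-1}$. We claim some digit among $a_0,\dots,a_{h-1}$ is zero. If all of them were nonzero, then each $a_j\geq1$ would give $a'\geq 1+q+\cdots+q^{h-1}$, a contradiction. So at most $h-1$ of the lower digits are nonzero, and $\wtq(a)\leq 1+(h-1)=h$.

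Also $\wtq(a)\geq1$ since $a\geq1$. For the range condition, $h\leq m-1$ gives $\delta_{q,h}-1\leq (q^m-1)/(q-1)-1<q^m-2$ when $q\geq 2$ and $m\geq 2$. In the edge case $q=2$, $h=m-1$ we have $\delta=2^m-1$, so $\delta-1=2^m-2=q^m-2$, which is still allowed. The digits used are indices $\leq h\leq m-1$, so $\wtq$ is computed within the $m$ available positions.

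I expect no real obstacle. The only delicate point is the digit argument in the second case: showing that exceeding $q^h$ while staying below the all-ones number forces a zero digit. This is handled by the comparison with $1+q+\cdots+q^{h-1}$ above. The result then follows from the standard BCH bound.
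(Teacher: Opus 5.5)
Your proof is correct and follows the paper's route: the BCH bound applied to the run $1,2,\ldots,\delta_{q,h}-1$ inside $I(q,m,h)$. The paper replaces your two-case digit analysis with the one-line observation that the smallest positive integer with at least $h+1$ nonzero base-$q$ digits is $1+q+\cdots+q^h=\delta_{q,h}$. One small slip: your strict inequality $(q^m-1)/(q-1)-1<q^m-2$ is false for $q=2$, where it is an equality. Only $\delta_{q,h}-1\leq q^m-2$ is needed, however, and your edge-case remark effectively supplies this.
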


\begin{IEEEproof}
Observe that $\delta_{q,h}=1+q+\cdots+q^h$ and $\delta_{q,h}-1=q+q^2+\cdots+q^h$. Since $h\leq m-1$, it follows that $\delta_{q,h}-1\leq q^m-2$.

If the base-$q$ representation of a positive integer has at least $h+1$ nonzero digits, then its smallest possible value is $1+q+\cdots+q^h=\delta_{q,h}$. Therefore, every integer satisfying $1\leq a\leq\delta_{q,h}-1$ has at most $h$ nonzero digits in its base-$q$ representation. Hence
\[
 \{1,2,\ldots,\delta_{q,h}-1\}\subseteq I(q,m,h).
\]
The generator polynomial thus has the consecutive zeros $\alpha,\alpha^2,\ldots,\alpha^{\delta_{q,h}-1}$. The BCH bound \cite{MacWilliamsSloane1977} gives $d(\mho(q,m,h))\geq\delta_{q,h}$.
\end{IEEEproof}

\subsection{Vanishing of Subspace Power Sums}

\begin{lemma}
Let $V$ be an $r$-dimensional $\F_q$-subspace of $\F_{q^m}$, where $1\leq r\leq m$. If a positive integer $E$ satisfies $s_q(E)<r(q-1)$, then
\[
 \sum_{x\in V}x^E=0.
\]
\end{lemma}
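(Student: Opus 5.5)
Fix an $\F_q$-basis $v_1,\dots,v_r$ of $V$, so that every $x\in V$ is written uniquely as $x=\sum_{i=1}^r t_iv_i$ with $(t_1,\dots,t_r)\in\F_q^r$. The plan is to expand $x^E$ by a Lucas-type multinomial expansion that respects the base-$q$ digits of $E$, then sum over $\F_q^r$ one coordinate at a time, and show that each monomial has some coordinate whose power sum over $\F_q$ vanishes.

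Write $E=\sum_j E_jq^j$ with $0\le E_j\le q-1$. Because $x\mapsto x^{q^j}$ is additive and fixes $\F_q$, we have $x^{q^j}=\sum_i t_iv_i^{q^j}$. Hence
\[
 x^E=\prod_j\Bigl(\sum_{i=1}^r t_iv_i^{q^j}\Bigr)^{E_j}.
\]
Expand each factor by the multinomial theorem; no reduction of exponents on the $t_i$ is needed. Every resulting monomial in $t_1,\dots,t_r$ has the form $t_1^{k_1}\cdots t_r^{k_r}$, with a coefficient in $\F_{q^m}$ that depends only on the $v_i$. Its exponents satisfy $k_1+\cdots+k_r=\sum_jE_j=s_q(E)$. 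Summing over $(t_1,\dots,t_r)\in\F_q^r$ factorizes the contribution of each monomial as $\prod_{i=1}^r\bigl(\sum_{t\in\F_q}t^{k_i}\bigr)$, with the convention $0^0=1$.

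Next we use the standard fact that $\sum_{t\in\F_q}t^k=0$ unless $k\ge 1$ and $(q-1)\mid k$, in which case the sum equals $-1$. If $k=0$ the sum is $q\cdot1=0$. If $k\ge1$ and $(q-1)\nmid k$, the sum is a sum of $k$-th powers over the cyclic group $\F_q^*$ with a nontrivial character, which vanishes. So a monomial can contribute only if every $k_i$ is a positive multiple of $q-1$. That forces $k_i\ge q-1$ for all $i$, and therefore $s_q(E)=\sum k_i\ge r(q-1)$, contradicting the hypothesis. Every monomial therefore sums to zero, and $\sum_{x\in V}x^E=0$.

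The only delicate point is the case $k_i=0$. It must be handled through the $0^0=1$ convention, giving a factor $q=0$ in characteristic $p$. That case is exactly what rules out monomials omitting some variable, and it is why the bound involves $r$ rather than merely the total degree. Positivity of $E$ is not actually needed beyond ensuring that $x^E$ is a polynomial with $0^E$ well defined; the argument is uniform otherwise.
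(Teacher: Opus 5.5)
Your proof is correct and follows essentially the same route as the paper: expand $x^E=\prod_j\bigl(\sum_i t_iv_i^{q^j}\bigr)^{E_j}$ as a polynomial in the coordinates, factor each monomial's sum over $\F_q^r$, and note that total degree $s_q(E)<r(q-1)$ forces some exponent $k_i<q-1$ (with $k_i=0$ giving the factor $q=0$), so every monomial sums to zero. Your contrapositive phrasing (all $k_i$ positive multiples of $q-1$ would force total degree at least $r(q-1)$) and your use of exact homogeneity instead of the paper's ``total degree at most $s_q(E)$'' are only cosmetic differences.
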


\begin{IEEEproof}
Choose an $\F_q$-basis $v_1,\ldots,v_r$ of $V$. Every $x\in V$ can be written uniquely as $x=z_1v_1+z_2v_2+\cdots+z_rv_r$, where $z_1,\ldots,z_r\in\F_q$. Write $E=\sum_{j\geq 0}E_jq^j$, where $0\leq E_j\leq q-1$. Since $z_i^{q^j}=z_i$ for every $z_i\in\F_q$, one has
\[
 x^E=\prod_{j\geq 0}\bigl(x^{q^j}\bigr)^{E_j}
 =\prod_{j\geq 0}\left(\sum_{i=1}^{r}z_iv_i^{q^j}\right)^{E_j}.
\]
The right-hand side is a polynomial in $z_1,\ldots,z_r$ whose total degree is at most $\sum_{j\geq 0}E_j=s_q(E)<r(q-1)$.

Expanding it and retaining only the monomials with nonzero coefficients gives
\[
 x^E=\sum_{\boldsymbol d}A_{\boldsymbol d}z_1^{d_1}\cdots z_r^{d_r},
 \qquad A_{\boldsymbol d}\in\F_{q^m}.
\]
Every exponent vector $\boldsymbol d=(d_1,\ldots,d_r)$ occurring here satisfies $d_1+\cdots+d_r<r(q-1)$. Consequently, for each such monomial, there is at least one index $i$ for which $0\leq d_i<q-1$.

We now show that this monomial sums to zero over $\F_q^r$. Indeed,
\[
 \sum_{(z_1,\ldots,z_r)\in\F_q^r}z_1^{d_1}\cdots z_r^{d_r}
 =\prod_{k=1}^{r}\left(\sum_{z\in\F_q}z^{d_k}\right).
\]
For the index $i$ chosen above, if $d_i=0$, then $\sum_{z\in\F_q}z^{d_i}=\sum_{z\in\F_q}1=q=0$ in characteristic $p$. If $1\leq d_i<q-1$, then
\[
 \sum_{z\in\F_q}z^{d_i}=\sum_{z\in\F_q^{*}}z^{d_i}=0,
\]
because $d_i$ is not a multiple of $q-1$ and $\F_q^{*}$ is cyclic of order $q-1$. Thus every monomial sums to zero over $\F_q^r$, and therefore $\sum_{x\in V}x^E=0$.
\end{IEEEproof}

\subsection{Digit-Sum Estimate}

\begin{lemma}
Let $1\leq a\leq q^m-2$, and write $a=\sum_{j=0}^{m-1}a_jq^j$, where $0\leq a_j\leq q-1$. Then
\[
 s_q((q-1)a)\leq(q-1)\wtq(a).
\]
\end{lemma}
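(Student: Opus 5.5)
We plan to use subadditivity of the base-$q$ digit sum, $s_q(M+N)\leq s_q(M)+s_q(N)$ for nonnegative integers $M,N$. This holds because adding $M$ and $N$ in base $q$, each carry lowers the digit sum by exactly $q-1$. Precisely, $s_q(M)+s_q(N)-s_q(M+N)=(q-1)\cdot(\text{number of carries})\geq 0$, which one sees by tracking the identity digit by digit. Iterating gives $s_q(\sum_k N_k)\leq\sum_k s_q(N_k)$ for any finite family of nonnegative integers.

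Next we decompose $(q-1)a=\sum_{j:\,a_j\neq 0}(q-1)a_jq^j$. There are exactly $\wtq(a)$ summands. Multiplying by $q^j$ only shifts digits, so $s_q((q-1)a_jq^j)=s_q((q-1)a_j)$. By subadditivity it therefore suffices to prove the single-digit bound $s_q((q-1)c)\leq q-1$ for every $1\leq c\leq q-1$.

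For the single-digit bound we compute directly: $(q-1)c=(c-1)q+(q-c)$. Both $c-1$ and $q-c$ lie in $\{0,\dots,q-1\}$, so this is the base-$q$ expansion and the digit sum is $(c-1)+(q-c)=q-1$. Hence each nonzero digit contributes exactly $q-1$, and summing gives $s_q((q-1)a)\leq(q-1)\wtq(a)$.

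The only step needing care is the carry identity behind subadditivity. We would prove it by induction on the position: at position $j$ the output digit equals $M_j+N_j+\gamma_j-q\gamma_{j+1}$, where $\gamma_j\in\{0,1\}$ are the carries and $\gamma_0=0$. Summing over $j$ telescopes the carries into $-(q-1)\sum_j\gamma_j$. Everything else is routine. The hypothesis $a\leq q^m-2$ is not actually needed beyond fixing the expansion.
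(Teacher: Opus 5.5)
Your proposal is correct and follows essentially the same route as the paper: subadditivity of $s_q$ applied to the decomposition $(q-1)a=\sum_{j:\,a_j\neq 0}(q-1)a_jq^j$, shift invariance of the digit sum, and the single-digit expansion $(q-1)c=(c-1)q+(q-c)$, which has digit sum $q-1$. Your carry identity $s_q(M)+s_q(N)-s_q(M+N)=(q-1)\cdot(\text{number of carries})$ makes rigorous what the paper states in one line, namely that carrying can only decrease the digit sum.
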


\begin{IEEEproof}
Let $J=\{j:0\leq j\leq m-1,\ a_j\neq 0\}$. By the subadditivity of the base-$q$ digit sum, $s_q(A+B)\leq s_q(A)+s_q(B)$, because carrying can only decrease the digit sum. Hence
\[
 s_q((q-1)a)
 =s_q\!\left(\sum_{j\in J}(q-1)a_jq^j\right)
 \leq\sum_{j\in J}s_q((q-1)a_jq^j).
\]
Multiplication by $q^j$ merely appends zeros to the base-$q$ representation, so $s_q((q-1)a_jq^j)=s_q((q-1)a_j)$.

If $1\leq d\leq q-1$, then
\[
 (q-1)d=(d-1)q+(q-d),
\]
where $0\leq d-1\leq q-2$ and $1\leq q-d\leq q-1$. This is a valid base-$q$ representation, and $s_q((q-1)d)=(d-1)+(q-d)=q-1$. Consequently,
\[
 s_q((q-1)a)\leq\sum_{j\in J}(q-1)
 =(q-1)|J|=(q-1)\wtq(a).
\]
\end{IEEEproof}

\subsection{Projectivized Power Sums}

\begin{definition}
Let $V\leq\F_{q^m}$ be an $r$-dimensional $\F_q$-subspace. Define
\[
 V^{[q-1]}=\{x^{q-1}:x\in V\setminus\{0\}\}\subseteq\F_{q^m}^{*}.
\]
Also write
\[
 \PG(V)=\{L\leq V:\dim_{\F_q}L=1\},
\]
the projective point set consisting of all one-dimensional $\F_q$-subspaces of $V$.
\end{definition}

\begin{lemma}
Let $V\leq\F_{q^m}$ have $\F_q$-dimension $r=h+1$. Then
\[
 |V^{[q-1]}|=\frac{q^{h+1}-1}{q-1}.
\]
Moreover, for every integer $a$ satisfying $1\leq a\leq q^m-2$ and $\wtq(a)\leq h$, one has
\[
 \sum_{\beta\in V^{[q-1]}}\beta^a=0.
\]
\end{lemma}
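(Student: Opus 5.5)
The plan has two parts: count the fibers of $x\mapsto x^{q-1}$ on $V\setminus\{0\}$, then convert the projective power sum into an ordinary subspace moment so that Lemma~2 and Lemma~3 apply.

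\emph{Cardinality.} For $x,y\in V\setminus\{0\}$ we have $x^{q-1}=y^{q-1}$ if and only if $(y/x)^{q-1}=1$. Since $y/x\in\F_{q^m}^{*}$, this holds if and only if $y/x\in\F_q^{*}$, because $\F_q^{*}$ is exactly the set of roots of $X^{q-1}-1$ in $\F_{q^m}$. Hence the fibers of $x\mapsto x^{q-1}$ on $V\setminus\{0\}$ are the sets $L\setminus\{0\}$ for $L\in\PG(V)$. Each fiber has exactly $q-1$ elements. Therefore
\[
|V^{[q-1]}|=|\PG(V)|=\frac{q^{h+1}-1}{q-1}.
\]

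\emph{Power sums.} The fiber description turns the sum over $V^{[q-1]}$ into a sum over $V$:
\[
\sum_{x\in V\setminus\{0\}}x^{(q-1)a}=(q-1)\sum_{\beta\in V^{[q-1]}}\beta^a=-\sum_{\beta\in V^{[q-1]}}\beta^a .
\]
Set $E=(q-1)a$. Then $E\geq 1$, so $0^E=0$, and the left side equals $\sum_{x\in V}x^E$. It therefore suffices to prove that this subspace moment vanishes. Lemma~3 gives
\[
s_q(E)\leq (q-1)\wtq(a)\leq (q-1)h<(q-1)(h+1)=r(q-1).
\]
This is exactly the hypothesis of Lemma~2 with $r=h+1$. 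Lemma~2 then yields $\sum_{x\in V}x^E=0$, which completes the argument.

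The only delicate points are bookkeeping. We need $E>0$ so that $x=0$ contributes nothing. We need the factor $q-1$ to be invertible in characteristic $p$; it is, since $q-1\equiv -1$. We also need $r=h+1\leq m$ for Lemma~2 to apply, and this holds because $h\leq m-1$. No reduction of $E$ modulo $n$ is needed, since Lemma~2 applies to any positive integer $E$. The main step is combining Lemma~3 with Lemma~2; once the fiber identity is in place, the rest is immediate.
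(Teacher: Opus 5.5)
Your proposal is correct and follows essentially the same route as the paper: you identify the fibers of $x\mapsto x^{q-1}$ as the punctured lines $L\setminus\{0\}$, and you rewrite the projective power sum as $-\sum_{x\in V}x^{(q-1)a}$. You then kill that subspace moment by combining the digit-sum estimate with the vanishing lemma. The only cosmetic difference is that you collapse each fiber directly, whereas the paper picks representatives $x_L$ and factors out $\sum_{\lambda\in\F_q^{*}}\lambda^{(q-1)a}=q-1$.
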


\begin{IEEEproof}
For $x,y\in V\setminus\{0\}$, one has $x^{q-1}=y^{q-1}$ if and only if $(x/y)^{q-1}=1$. The complete set of roots of $T^{q-1}=1$ in $\F_{q^m}^{*}$ is precisely $\F_q^{*}$: the latter already contains $q-1$ roots, while $T^{q-1}-1$ has degree $q-1$. Therefore,
\[
 x^{q-1}=y^{q-1}\quad\Longleftrightarrow\quad x/y\in\F_q^{*}.
\]
The fibers of $x\mapsto x^{q-1}$ on $V\setminus\{0\}$ are thus exactly the sets obtained by removing the zero vector from one-dimensional $\F_q$-subspaces. Hence
\[
 |V^{[q-1]}|=\frac{|V|-1}{q-1}=\frac{q^{h+1}-1}{q-1}.
\]

Now let $a$ satisfy $1\leq a\leq q^m-2$ and $\wtq(a)\leq h$, and set $E=(q-1)a$. By the digit-sum estimate,
\[
 \begin{aligned}
 s_q(E)&=s_q((q-1)a)\leq(q-1)\wtq(a)\\
 &\leq h(q-1)<(h+1)(q-1).
 \end{aligned}
\]
Since $\dim_{\F_q}V=h+1$, the subspace power-sum vanishing lemma gives $\sum_{x\in V}x^E=0$. Because $E>0$, the term with $x=0$ is zero, so
\[
 \sum_{x\in V\setminus\{0\}}x^{(q-1)a}=0.
\]
Choose one representative $x_L\in L\setminus\{0\}$ from each $L\in\PG(V)$. Then
\[
 \begin{aligned}
 0
 &=\sum_{x\in V\setminus\{0\}}x^{(q-1)a}\\
 &=\sum_{L\in\PG(V)}\sum_{\lambda\in\F_q^{*}}(\lambda x_L)^{(q-1)a}\\
 &=\sum_{L\in\PG(V)}x_L^{(q-1)a}
   \sum_{\lambda\in\F_q^{*}}\lambda^{(q-1)a}\\
 &=(q-1)\sum_{L\in\PG(V)}(x_L^{q-1})^a,
 \end{aligned}
\]
where $\lambda^{(q-1)a}=1$ for $\lambda\in\F_q^{*}$. In $\F_{q^m}$, the element $q-1$ equals $-1$, so it is nonzero and can be canceled. As $L$ varies, the elements $x_L^{q-1}$ run through $V^{[q-1]}$ without repetition. Hence $\sum_{\beta\in V^{[q-1]}}\beta^a=0$.
\end{IEEEproof}

\subsection{Main Theorem}

\begin{theorem}
Let $q$ be a prime power, let $m\geq 2$, and let $1\leq h\leq m-1$. Then
\[
 d(\mho(q,m,h))=\frac{q^{h+1}-1}{q-1}.
\]
More specifically, for every $(h+1)$-dimensional $\F_q$-subspace $V\leq\F_{q^m}$, define
\[
 c_V(X)=\sum_{\substack{0\leq i\leq n-1\\ \alpha^i\in V^{[q-1]}}}X^i
 \in\F_q[X]/(X^n-1).
\]
Then $c_V\in\mho(q,m,h)$ and
\[
 \operatorname{wt}(c_V)=|V^{[q-1]}|=\frac{q^{h+1}-1}{q-1}.
\]
\end{theorem}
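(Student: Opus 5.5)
The plan is to combine the BCH lower bound from the first lemma with the explicit codeword supplied by the projectivized power-sum lemma. Everything substantive has already been proved, so the argument amounts to translating the power-sum statement into the codeword criterion stated in the introduction.

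\textbf{Existence of $V$.} Since $1\leq h\leq m-1$, we have $h+1\leq m$. Hence $(h+1)$-dimensional $\F_q$-subspaces of $\F_{q^m}\cong\F_q^m$ exist; for instance, take the span of $1,\alpha,\ldots,\alpha^h$, which are $\F_q$-independent because $\alpha$ has degree $m$ over $\F_q$. Fix such a subspace $V$.

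\textbf{Membership and weight of $c_V$.} Under the identification of coordinate $i$ with $\alpha^i$, the coefficient $c_\beta$ of $c_V$ equals $1$ if $\beta\in V^{[q-1]}$ and $0$ otherwise. This makes sense because $V^{[q-1]}\subseteq\F_{q^m}^{*}=\{\alpha^0,\ldots,\alpha^{n-1}\}$ and each element has a unique exponent $i$ modulo $n$. The membership criterion then reads
\[
 c_V\in\mho(q,m,h)\iff\sum_{\beta\in V^{[q-1]}}\beta^a=0\ \text{ for all } a\in I(q,m,h).
\]
Every $a\in I(q,m,h)$ satisfies $1\leq a\leq q^m-2$ and $\wtq(a)\leq h$, so the projectivized power-sum lemma gives exactly this vanishing. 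The coefficients are $0/1$, so they lie in $\F_q$. The weight of $c_V$ is the number of nonzero coefficients, which is $|V^{[q-1]}|=(q^{h+1}-1)/(q-1)$ by the same lemma; this count is positive, so $c_V\neq 0$.

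\textbf{Conclusion.} The nonzero codeword $c_V$ gives $d(\mho(q,m,h))\leq(q^{h+1}-1)/(q-1)$, and the BCH lemma gives the reverse inequality, so equality holds. There is no real obstacle here. The only point that needs care is checking that the evaluation convention $c(\alpha^a)=\sum_i c_i\alpha^{ia}$ matches the power sum over the support, which it does because $c_i=1$ exactly when $\alpha^i\in V^{[q-1]}$.
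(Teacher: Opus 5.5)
Your proposal is correct and follows the paper's proof essentially step for step. The codeword $c_V$ is certified by the projectivized power-sum lemma through the identity $c_V(\alpha^a)=\sum_{\beta\in V^{[q-1]}}\beta^a$, its weight comes from that lemma's count, and the BCH lemma supplies the matching lower bound. Your explicit check that an $(h+1)$-dimensional subspace exists (e.g., the span of $1,\alpha,\ldots,\alpha^h$) is a minor detail the paper leaves implicit.
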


\begin{IEEEproof}
Choose any $(h+1)$-dimensional $\F_q$-subspace $V\leq\F_{q^m}$. Index the coordinates of $c_V$ by $\F_{q^m}^{*}=\langle\alpha\rangle$, and set
\[
 c_\beta=
 \begin{cases}
 1, & \beta\in V^{[q-1]},\\
 0, & \beta\notin V^{[q-1]}.
 \end{cases}
\]
Thus $c_V(X)=\sum_{i=0}^{n-1}c_{\alpha^i}X^i$.

For any $a\in I(q,m,h)$, one has $1\leq a\leq q^m-2$ and $\wtq(a)\leq h$. By the projectivized power-sum lemma,
\[
 \sum_{\beta\in V^{[q-1]}}\beta^a=0.
\]
On the other hand,
\[
 \begin{aligned}
 c_V(\alpha^a)
 &=\sum_{i=0}^{n-1}c_i(\alpha^a)^i
 =\sum_{\beta\in\F_{q^m}^{*}}c_\beta\beta^a\\
 &=\sum_{\beta\in V^{[q-1]}}\beta^a=0.
 \end{aligned}
\]
Therefore, $c_V$ satisfies all zero conditions defining $\mho(q,m,h)$, and hence $c_V\in\mho(q,m,h)$. The counting formula in the projectivized power-sum lemma gives
\[
 \operatorname{wt}(c_V)=|V^{[q-1]}|=\frac{q^{h+1}-1}{q-1},
\]
so
\[
 d(\mho(q,m,h))\leq\frac{q^{h+1}-1}{q-1}.
\]
The BCH lower bound gives the reverse inequality. Combining the two proves the result.
\end{IEEEproof}
\section*{Declaration of Generative AI and AI-Assisted Technologies in the Writing Process}
The author used GPT 5.5 Pro to enumerate possible cases, search for counterexamples, and generate preliminary proofs. All AI-generated suggestions were manually reviewed, modified, and independently verified by the author. No unverified AI-generated proof or mathematical claim was incorporated into the final manuscript.

\end{document}